\newtheorem{theorem}{Theorem}
\def\r#1{{\eqref{#1}}}
\def\cl#1{{\cal #1}}
\def\math#1{$#1$}
\def\mand#1{$$#1$$}
\def\mld#1{\begin{equation}
#1
\end{equation}}
\definecolor{filllightgray}{rgb}{0.6,0.6,0.6}
\newcommand{\myfbox}[3][white]{%
\begin{center}%
\begin{tikzpicture}
\node[draw,inner sep=5pt,rounded corners=5pt,line width=1pt,fill=#1] at (0,0)
{\parbox{#2}{\centering #3}};
\end{tikzpicture}%
\end{center}%
}
\def\degin{s^{\text{in}}}
\def\degout{s^{\text{out}}}
\title{True Nonlinear Dynamics from Incomplete Networks}
\author{Chunheng Jiang, Jianxi Gao, Malik Magdon-Ismail\\
% All authors must be in the same font size and format. Use \Large and \textbf to achieve this result when breaking a line
Rensselaer Polytechnic Institute\\ %If you have multiple authors and multiple affiliations
% use superscripts in text and roman font to identify them. For example, Sunil Issar,\textsuperscript{\rm 2} J. Scott Penberthy\textsuperscript{\rm 3} George Ferguson,\textsuperscript{\rm 4} Hans Guesgen\textsuperscript{\rm 5}. Note that the comma should be placed BEFORE the superscript for optimum readability
110 8th Street, Troy, NY 12180\\
\{jiangc4,gaoj8\}@rpi.edu, magdon@cs.rpi.edu % email address must be in roman text type, not monospace or sans serif
}
\begin{document}

\maketitle

\begin{abstract}
We study \emph{nonlinear} dynamics on complex networks. Each vertex $i$ has a state $x_i$ which evolves according to a networked dynamics to a steady-state $x_i^*$. We develop fundamental tools
to learn the true steady-state of a small part of the network, \emph{without knowing the
full network}. A naive approach and the current state-of-the-art is to follow the dynamics of the observed partial network to local equilibrium. This dramatically fails to extract the true steady state. We use a mean-field approach to map the dynamics of the unseen part of the network to a single node, which allows us to recover accurate estimates of steady-state on as few as 5 observed vertices in domains ranging from ecology to social networks to gene regulation. Incomplete networks are the norm in practice, and we offer new ways to think about nonlinear dynamics when only sparse information is available.
\end{abstract}

\section{Dynamical Systems on Incomplete Networks}

The fundamental task in learning is to infer unknown quantities of interest
from incomplete data. We study learning complex nonlinear dynamics on networks from
incomplete data. Such problems are fundamental because 
complex nonlinear dynamical systems are ubiquitous, often
modeled as coupled
nonlinear ordinary differential equations (ODEs), for example
epidemic spreading \cite{pastor2001epidemic}, Michaelis-Menten
gene regulatory dynamics \cite{alon2006introduction,gao2016universal},
Lotka-Volterra ecological dynamics \cite{lotka1910contribution}.
A graph  \math{G=(V,E)} with \math{n\times n} (weighted)
adjacency matrix \math{A} is the backbone on which the
dynamical equations are coupled together.
We consider a general dynamics in which
each vertex \math{i} of \math{G} has a state \math{x_i}
which evolves according to a self-driving force and a
sum of interaction forces over neighbors
  \begin{equation}
    \dot{x}_i=f(x_i)+\sum_{j\in V}A_{ij}g(x_i,x_j).\label{eq:general-ODE}
  \end{equation}
 The functions \math{f(\cdot)} and \math{g(\cdot,\cdot)} are general and
 usually nonlinear,
 and the positive weighted connectivity matrix \math{A} modulates
  the interactions between vertices.
    Several instances of such dynamics with appropriate
  choices of  \math{f(\cdot)} and \math{g(\cdot,\cdot)}
  are shown in Table~\ref{tab:real-dynamics}. From an initial state,
  one can step forward in time, simulating the dynamics
  in \r{eq:general-ODE} until convergence to equilibrium states
  \math{x_i^*}.

The complete information setting in \r{eq:general-ODE} is unrealistic, and we must accept that in practice, only part of a network can be measured. Hence, we assume that a subgraph with \math{m} nodes \math{G^{(s)}=(V^{(s)},E^{(s)})} is known, with corresponding \math{m\times m} adjacency matrix
  \math{A^{(s)}}, where \math{V^{(s)}\subseteq V} and
  \math{E^{(s)}\subseteq E}
  (\math{s} for sampled). 
%How to sample this subgraph plays a role, but we postpone that discussion.
This paradigm is unavoidable when the full network is unmeasurable (for example, protein-protein interactions, metabolic and terrorist networks~\cite{stumpf2005sampling}). The paradigm is also useful when the full network is too large to handle, where one can deliberately sample a much smaller subnetwork for the analysis. Analyzing a full network from a sampled
subnetwork has been studied in several contexts,  e.g. to estimate average or total degree~\cite{kurant2011towards};
degree distributions and clustering coefficients~\cite{stumpf2005sampling,gjoka20132,seshadhri2014wedge}; shortest paths~\cite{leskovec2006sampling}; motif counts~\cite{klusowski2018counting}; vertex and edge counts~\cite{katzir2011estimating,kurant2012graph}.

\begin{table*}{\small\tabcolsep5pt\renewcommand{\arraystretch}{1.35}
\begin{tabular}{p{0.2\textwidth}p{0.05\textwidth}p{0.15\textwidth}p{0.45\textwidth}}
Applications & Vertex & State at vertex \math{i} & Dynamics \\
\hline
Ecological (\scalebox{0.85}{\citeyear{allee1949principles,hui2006carrying}}) & Species & Abundance & \math{\dot{x}_i=B_i+x_i(1-\frac{x_i}{K_i})(\frac{x_i}{C_i}-1)+\sum_jA_{ij}\frac{x_ix_j}{D_i+E_ix_i+H_jx_j}}\\
Regulatory (\scalebox{0.85}{\citeyear{alon2006introduction,karlebach2008modelling}}) & Gene & Expression level & \math{\dot{x}_i=-Bx_i^f+\sum_j A_{ij}R\frac{x_j^h}{x_j^h+1}}\\
Epidemic (\scalebox{0.85}{\citeyear{pastor2001epidemic,hufnagel2004forecast,dodds2005generalized}}) & Person & Infection rate & \math{\dot{x}_i=-Bx_i+\sum_j A_{ij}R(1-x_i)x_j}\\
\hline
\end{tabular}}
\caption{Examples of real systems with nonlinear interaction dynamics.}
\label{tab:real-dynamics}
\end{table*}

\begin{figure*}[ht]
\centering
\includegraphics[width=\linewidth]{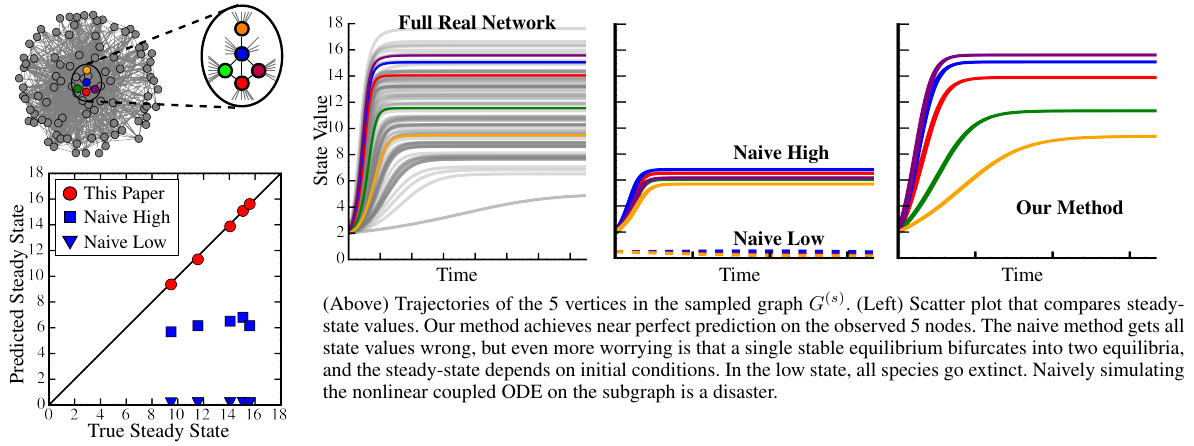}
    \caption{Predicting steady-state abundances of 5 species
      interacting in a larger 97 species ecological network. Predictions
      use only the interactions of those five species (incomplete information).
      \label{fig:example-5}}
  \end{figure*}
 Our task is to estimate true steady-states \math{x^*}
 for vertices in \math{V^{(s)}}, despite only seeing an incomplete
 network \math{G^{(s)}}.
 The state-of-the-art
 naive approach is to simulate \r{eq:general-ODE} on the subgraph
 \math{G^{(s)}}. For example, one may collect a social network from 
 Boston and run the epidemic model (Table \ref{tab:real-dynamics}) 
 to obtain the probability of each person to be infected. 
 The results are a dramatic and universal disaster,
 because the sub-social network of Boston is just a small
 part of a vast social network, and the people in Boston interact with people outside.
 That the naive method is bad is not surprising because the essence of the
 dynamics in \r{eq:general-ODE} are the interactions, and the subgraph
 is missing many of those interactions.
 Hence, not observing a large part of the network appears to be an
 insurmountable hurdle to learning the true steady states on the observed (small)
 part of the network.

 We develop a methodology to accurately predict \emph{true} steady states
 using only information local to \math{G^{(s)}}. We
 demonstrate the power of our approach in Figure~\ref{fig:example-5},
 for an ecological network where vertices are species and states are species-abundance.
 This ecological network of 97 species
 follows the symbiotic dynamics in Table~\ref{tab:real-dynamics},
 see~\cite{gao2016universal}.
 Let us see what happens when a biologist who is interested in
 \emph{five} species collects the relevant 5-vertex subgraph \math{G^{(s)}}
 and performs a naive simulation on the subgraph to get a steady-state.
 The results in Figure~\ref{fig:example-5} are as expected. The
 naive simulation is wrong. Even worse, it cannot identify the
 number of equilibria from the subgraph:
 the full network has one attractor, but the subgraph has two.
 It means that, with the wrong initial conditions,
 the biologist would conclude that the five species are going extinct,
 when in fact they are all doing fine in the real network.
 
  \emph{Our Contributions.}
  We give the first method to accurately learn
  steady-state dynamics when only a part of the network is observed.
  This is remarkable given the inherent interactive nature of the dynamics.
  The result in Figure~\ref{fig:example-5} demonstrates that our
  methods extract very close approximations to the true steady state dynamics
  \emph{in the full network} when just 5\% of vertices are observed.
  This surprising result has the potential
  for huge impact since up to now,
  the state-of-the-art is the naive approach which produces completely wrong conclusions.
  There are three main ideas behind our method.
  \begin{itemize}
  \item
    A mean field approximation to account for the impact of the unobserved part of the network.
  \item
    Summarizing the mean field impact using a 
    \emph{resilience} parameter, \math{\beta}, which 
    depends only on network topology. How the resilience
    impacts the final outcome depends on the coupled nonlinear
    dynamics through \math{f(\cdot)} and \math{g(\cdot,\cdot)}.
  \item
    Estimating the \emph{full network's} resilience from the observed subgraph.
    A network's resilience  is important in other contexts. The resilience characterizes a complex system's ability to
retain its basic functionality under edge and vertex faults. Hence, our estimates of resilience from incomplete information are of independent interest.
  \end{itemize}
Combining these three ideas, we obtain accurate
estimates of the steady-states as in Figure~\ref{fig:example-5}. Our
estimates are near-exact matches to the true steady-states.

\section{Model}

The true dynamics on the full network \math{G}
are governed by
the coupled nonlinear dynamics in
\r{eq:general-ODE}. We represent \math{G} by its adjacency
matrix \math{A}, and assume that the total size of the network,
\math{n}, is known. The observed sampled subgraph
\math{G^{(s)}=(V^{(s)},E^{(s)})} has adjacency matrix \math{A^{(s)}}.
There are many ways to sample vertices and edges from a graph.
We focus on
two natural sampling methods which are reasonable models of how the incomplete
data is often obtained.
\begin{itemize}
\item {\bf (Random Vertex Sampling)} Form the induced subgraph for
  randomly sampled vertices. We assume the degrees 
  of the sampled vertices are also known. For example, we know the number
  of friends each person has in a social network and who is friends with whom
  among the sampled subnetwork.
\item {\bf (Random Walk)} A random vertex is sampled. At each step,
  an available edge is followed to sample a new vertex.
  The degrees
  of the vertices are implicitly
  available since, at each step, the available edges must
  be known.
\end{itemize}
Our method can be extended to other sampling schemes, such as edge sampling, degree
biased sampling, Metropolis-Hastings random walks.
The main property we require of the sampling is 
that specific topological parameters of the graph can be reliably estimated
from the sample.

For \math{i\in V^{(s)}}, the
steady-state value in \math{G} is denoted \math{x_i^*}. We denote by
\math{z_{i}^*} the steady-state value obtained using the partially
observed network. We loosely use \math{z_i} to refer both to the vertex
and state variable at the vertex.
The naive method solves the same 
system in \r{eq:general-ODE} for \math{A^{(s)}} instead of
\math{A}. That is, for \math{i\in V^{(s)}}, \math{z_i^*}
is the steady-state of the dynamical system
\mld{
 \text{(Naive method)}~~~~~~\dot{z}_{i}
  =
  f(z_{i})+\sum_{j\in V^{(s)}}A^{(s)}_{ij}g(z_{i},z_{j}).
\label{eq:naive}}
This naive approach produces incorrect conclusions, yet
it is common practice because that's all practitioners have.
To get correct results, one \emph{must} account for missing data.

\section{Mean Field Approximation and Resilience}
The main idea is shown in the (sampled) subgraph
in Figure~\ref{fig:mean-field-approx}(a).
We focus
on one vertex in \math{V^{(s)}},  \math{z_2}.
Vertex \math{z_2} interacts with
its neighbors  \math{z_1} and \math{z_4} in the subgraph, and
its neighbors outside the subgraph. All the subgraph nodes
\math{z_1,\ldots,z_5} are in a similar situation. Now fix the value
for each external neighbor of \math{z_2} to
its \emph{true steady-state value in the full network}, shown as
\math{x_1^*,x_2^*,x_3^*}. Do the same for the external
neighbors of all the subgraph vertices \math{z_1,\ldots,z_5}.
As far as the subgraph is concerned, all external neighbors have
converged to their steady-state values and are providing the right
interactive feedback to all subgraph nodes. The subgraph is effectively
isolated from the rest of the network and will
converge to steady-states of the full  network.
We state the next theorem without proof.
The preceding discussion essentially
amounts to the proof.
\begin{figure}[h]
\centering
\includegraphics[width=\linewidth]{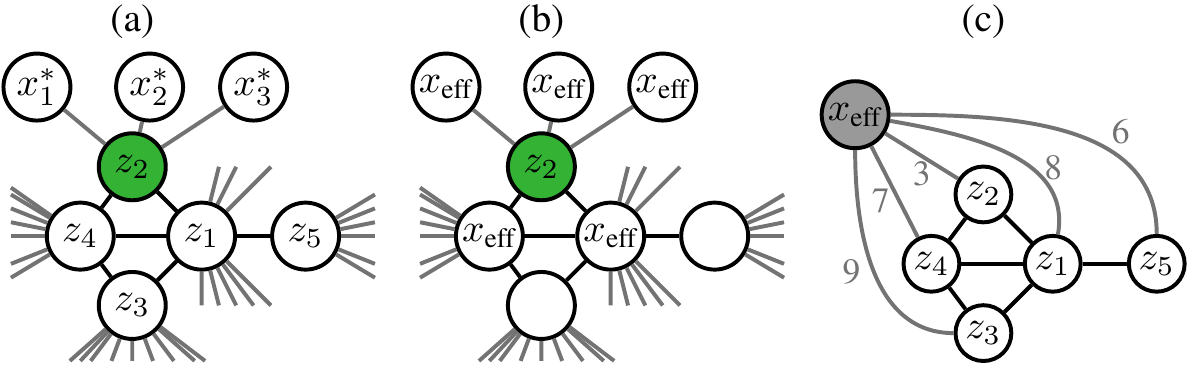}
\caption{Mean field approximation.\label{fig:mean-field-approx}}
\end{figure}

\begin{theorem} The steady-states \math{z_i^*}
  of the dynamical system
  \mld{
    \dot{z}_i=f(z_i)+\sum_{j\in V^{(s)}}A^{(s)}_{ij}g(z_i,z_j)
    +\sum_{j\not\in V^{(s)}}A_{ij}g(z_i,x_j^*)\label{eq:MF1}
    }
  recover one of the true steady-states of the full network.
\end{theorem}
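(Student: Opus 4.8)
The plan is to verify by direct substitution that the restriction of a chosen true steady-state of the full network to the sampled vertices is itself a steady-state of the mean-field system \eqref{eq:MF1}. Fix any equilibrium $x^*$ of the full dynamics \eqref{eq:general-ODE}, so that for every $i\in V$ we have $0=f(x_i^*)+\sum_{j\in V}A_{ij}g(x_i^*,x_j^*)$. This same equilibrium supplies the pinned external feedback values $x_j^*$ appearing in the third term of \eqref{eq:MF1}, so the construction is already referencing a specific true steady-state.

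First I would propose the candidate $z_i=x_i^*$ for every $i\in V^{(s)}$ and evaluate the right-hand side of \eqref{eq:MF1} at this point. Second I would invoke the fact that $A^{(s)}$ is the induced-subgraph adjacency matrix, so that $A^{(s)}_{ij}=A_{ij}$ whenever $i,j\in V^{(s)}$; this lets me replace $A^{(s)}_{ij}$ by $A_{ij}$ in the internal interaction sum without changing its value. Third I would merge the internal sum over $j\in V^{(s)}$ and the external sum over $j\notin V^{(s)}$ into a single sum over all $j\in V$, since the summand $A_{ij}g(x_i^*,x_j^*)$ now has identical form across both ranges. The right-hand side then collapses to $f(x_i^*)+\sum_{j\in V}A_{ij}g(x_i^*,x_j^*)$, which vanishes by the full-network steady-state condition. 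Hence $\dot z_i=0$ at $z_i=x_i^*$, so $x^*$ restricted to $V^{(s)}$ is an equilibrium of \eqref{eq:MF1}, and in this sense the mean-field system reproduces the true steady-state.

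The argument is essentially a bookkeeping identity, so the main obstacle is not analytic but interpretive: justifying the word \emph{one} in the statement. The full network may possess several equilibria, and the construction pins the external neighbors to one particular choice $x^*$, so the equilibrium recovered is precisely the restriction of that same $x^*$; a different choice of external feedback would reproduce a different true steady-state. A secondary subtlety worth flagging is that \eqref{eq:MF1} could in principle admit additional fixed points not arising from any full-network equilibrium, and that whether the dynamics actually \emph{converges} to $x^*|_{V^{(s)}}$ depends on its basin of attraction. However, the theorem asserts only that the equilibria of \eqref{eq:MF1} recover a true steady-state, so the fixed-point verification above is sufficient and no stability or convergence analysis is required.
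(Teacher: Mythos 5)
Your proposal is correct and is essentially the argument the paper itself gives: the paper states this theorem without a formal proof, noting that the preceding discussion (pinning external neighbors to their true steady-state values so the subgraph sees the correct feedback) amounts to the proof, and your direct substitution of $z_i=x_i^*$ with the sum-merging via $A^{(s)}_{ij}=A_{ij}$ is exactly that argument made explicit. Your closing caveats about possible extra fixed points and basins of attraction are fair observations that the paper also leaves unaddressed.
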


We cannot implement~\eqref{eq:MF1}, because $x_i^*$ are unknown.
In the mean field approximation,
we replace \math{x_i^*} by an average influence
\math{x_\text{eff}}, see Figure~\ref{fig:mean-field-approx}(b).
Of course it is an approximation, but it works well for analyzing complex interacting systems
such as spin-systems~\cite{edwards1975theory}.
The iteration-0 estimate is  \math{x_\text{eff}} for all states.

Fix the states for all
vertices but (say) \math{z_2} to \math{x_\text{eff}},
and find the steady-state for \math{z_2}, as in Figure~\ref{fig:mean-field-approx}(b).
Now, \math{z_2} is effectively isolated from the rest of the
network as all its neighbors are \emph{fixed} at \math{x_\text{eff}}. 
So, \math{z_2} evolves to a steady-state, following the dynamics
\math{\dot{z}_2=f(z_2)+5g(z_2,x_\text{eff})}. Repeat for
each subgraph-vertex to arrive at
\math{z_i^{(1)}}, the steady-states of
\mld{
\label{eq:uncoupled}
  \dot{z}_i=f(z_i)+\delta_i g(z_i,x_\text{eff}),
}
where \math{\delta_i} is the degree of \math{z_i} in \math{G}.
These equations are uncoupled since \math{x_\text{eff}}
is \emph{fixed}. This is
the method used in \cite{gao2016universal}
to analyze the dynamics on the full network
by reducing to \math{n} uncoupled ODEs. We now iterate further.
Suppose the steady-state from iteration \math{\tau} is \math{z_i^{(\tau)}}.
We obtain \math{z_i^{(\tau+1)}}, the approximation at
iteration \math{\tau+1} as the steady-state solution to
the \emph{uncoupled} equations
\mld{
\label{eq:MF2}
\resizebox{.9\columnwidth}!{
$\dot{z}_i=f(z_i)+\sum_{j\in V^{(s)}}A^{(s)}_{ij}g(z_i,z_j^{(\tau)})+
  \sum_{j\not\in V^{(s)}}A_{ij} g(z_i,x_\text{eff})$.
}}
Comparing \r{eq:MF2} to the exact solution in \r{eq:MF1},
the external forces are replaced by an effective external
force and the interaction term is approximated by an interaction to
a previous steady-state. Iterating to convergence,
\math{z^{(\tau)}_i} converges to a steady-state \math{z_i^*} which
solves the coupled system
\mld{
\label{eq:MF3}
\resizebox{.9\columnwidth}!{
  $\dot{z}_i=f(z_i)+\sum_{j\in V^{(s)}}A^{(s)}_{ij}g(z_i,z_j)+
  (\delta_i-\delta_i^{(s)})g(z_i,x_\text{eff})$,
}}
where \math{\delta_i^{(s)}} is the degree of \math{z_i} in \math{G^{(s)}}.

The naive method in
\r{eq:naive} resembles \r{eq:MF3} with one crucial difference,
an additional term for the external force on a vertex.
The entire system in
\r{eq:MF3} corresponds to adding just one more vertex to our subgraph,
whose value is fixed at \math{x_\text{eff}},
with a weighted edge to
\math{z_i} of weight
\math{A^{(s)}_{i,x_\text{eff}}=\delta_i-\delta_i^{(s)}}. We show this
augmented graph for our example in Figure~\ref{fig:mean-field-approx}(c).
\myfbox[filllightgray]{0.425\textwidth}{\parbox{\linewidth}{\emph{\bf
    To account for missing vertices,
    add \emph{one} vertex to the subgraph, fixed 
   to \math{x_\text{eff}} and add degree-weighted edges from
    \math{x_\text{eff}} to all vertices in the subgraph.}}}
Next, we discuss how to compute \math{x_\text{eff}} to
estimate the mean-field
interaction with unseen vertices.
The complication is that this estimate cannot depend on the
missing information.
This is possible because \math{x_\text{eff}}
depends on the missing information
only through global topological statistics of the
network, and we can estimate those topological statistics when the subgraph
is sampled appropriately.

\subsection{Computing the Effective External Impact}
\label{sec:xeff}

There are two unknown quantities in \r{eq:MF3}. The degree \math{\delta_i} and
\math{x_\text{eff}}. We now discuss  \math{x_\text{eff}}, following the general approach
in~\cite{gao2016universal}. Consider vertex \math{i} and the interaction term
\math{\sum_{j} A_{ij} g(x_i,x_j)} in \r{eq:general-ODE}, where 
\math{A_{ij}} is the influence \math{j} has on \math{i}. The in-degree \math{\degin_i=\sum_{j}A_{ij}} and the out-degree
\math{\degout_i=\sum_{j}A_{ji}}. Assuming \math{A_{ij}\ge 0}, the interaction term is
the in-degree times an average interaction,
\mld{
  \sum_j A_{ij} g(x_i,x_j) = \degin_i\frac{\sum_{j} A_{ij} g(x_i,x_j)}{\sum_{k} A_{ik}}
}
Here, the in-degree \math{\degin_i}
captures the idiosyncratic part, and the average \math{g(\cdot,\cdot)} captures
the network effect.
Our first mean-field approximation is to replace
local averaging with global averaging, which approximates the
network-impact on a vertex as nearly homogeneous. Specifically, we have
\mld{
\label{eq:MF_av}
\resizebox{.9\columnwidth}!{
$\frac{\sum_{j} A_{ij} g(x_i,x_j)}{\sum_{k} A_{ik}}
\approx
\frac{\sum_{ij} A_{ij} g(x_i,x_j)}{\sum_{i k} A_{i k}}
=
\frac{\bm{1}^T A g(x_i,\bm{x})}{\bm{1}^T A \bm{1}}$,
}}
where the vector \math{g(x_i,\bm{x})} has 
\math{j}th component
\math{g(x_i,x_j)}. Define an averaging linear operator
\mld{\cl L_A({\bm z})=\frac{\bm{1}^T A }{\bm{1}^T A \bm{1}}{\bm z}
=\frac{\bm{s}^{\text{out}}\cdot {\bm z}}{\bm{s}^{\text{out}}\cdot \bm{1}},}
which is
a weighted average of the entries in \math{\bm z}.
Our  mean-field approximation results in the approximate dynamics
\mld{
\dot{x}_i=f(x_i)+\degin_i \cl L_A[g(x_i,\bm{x})].
\label{eq:approx1}}
In the first order linear
approximation, we take \math{\cl L_A} inside \math{g}.
Our second mean-field approximation is that the 
average of external interactions is approximately the interaction with
the average.  That is
\math{\cl L_A[g(x_i,\bm{x})]\approx g(x_i,\cl L_A(\bm{x}))} and
\mld{
\dot{x}_i=f(x_i)+\degin_i g(x_i,\cl L_A(\bm{x})),
\label{eq:approx2}}
where \math{\cl L_A(\bm x)} is a global state. Let \math{x_\text{av}\triangleq\cl L_A(\bm x)}.
Applying
\math{\cl L_A} to both sides of
\r{eq:approx2} gives
\mld{
\dot{x}_\text{av}=\cl L_A[f(\bm x)]+
\cl L_A[\bm\degin g(\bm x,x_{\text{av}})].}
Extensive tests in~\cite{gao2016universal} show that the
in-degrees \math{\degin_i} and the interactions \math{g(x_i,x_{\text{av}})} are roughly
uncorrelated, so the \math{\cl L_A}-average of the product is
the product of
\math{\cl L_A}-averages.
Thus, our third mean-field approximation is
\math{\cl L_A[\bm\degin g(\bm x,x_{\text{av}})]
\approx\cl L_A(\bm\degin)\cl L_A[g(\bm x,x_{\text{av}})]}.
Using the linear approximation again, we take the \math{\cl L_A}-average
inside \math{f} and \math{g}
\mld{
\dot{x}_{\text{av}}
=
f(\cl L_A(\bm x))+
\cl L_A(\bm \degin) g(\cl L_A(\bm x),x_{\text{av}}).
}
Now we have a dynamical system for \math{x_{\text{av}}},
\mld{
\dot{x}_{\text{av}}=
f(x_{\text{av}})+
\beta g(x_{\text{av}},x_{\text{av}}),\label{eq:dotxeff}}
where the resilience \math{\beta=\cl L_A(\bm{s}^{\rm in})}.
For undirected graphs,
\math{\beta=\sum_i\delta_i^2/\sum_i\delta_i
=\langle\delta^2\rangle/\langle\delta\rangle}.
The steady-state of \r{eq:dotxeff} is the external effective impact, \math{x_{\text{eff}}}.
Plugging it into
\r{eq:approx2} gives
an uncoupled ODE for \math{x_i},
\mld{
\dot{x}_i=f(x_i)+\degin_i g(x_i,x_{\text{eff}}).\label{eq:dotxi}}
In the mean-field
approximation, \math{g(x_i,x_j)} in 
\r{eq:general-ODE} is replaced by an interaction with a mean-field
external world
\math{g(x_i,x_{\text{eff}})} and the number of
neighbors impacting \math{x_i} is captured by  \math{\degin_i}.
To approximately obtain the steady-states of
the system, one first solves the ODE in
\r{eq:dotxeff} to get \math{x_{\text{eff}}},
and then \math{n} uncoupled ODEs at each vertex
to get \math{x_i}, which only depends on
\math{\degin_i} if given \math{x_{\text{eff}}}. The method works well because the mean-field approximations
only need to hold \emph{at the steady-state}. Hence, we can recover the steady-state for any vertex (for
example the sampled vertices) from accurate estimates
of degrees \math{\degin_i} (\math{\delta_i} in the undirected case) and the resilience parameter $\beta$.

\subsection{Evaluating the Mean-Field Approximation}
One non-trivial implication of our mean-field approach is that the steady-states
are approximated by solving the uncoupled equations in
\r{eq:uncoupled}. The parameter \math{x_{\text{eff}}} in \r{eq:uncoupled}
only depends on
\math{\beta=\langle \delta^2\rangle/\langle \delta\rangle}.
So the ODE in \r{eq:uncoupled} only depends on 
the degree sequence of the original  network, which means the steady-states can be approximated by knowing only a network's degree sequence. We verify this in
Figure~\ref{fig:rewire-heter}(a), which compares
true steady-states in the ecological network with
steady-states of a random network that preserves the
degree sequence. The near-perfect matching of the steady-states  
is empirical evidence for our  mean-field approach.

\subsection{Accuracy of Our Approach}

The mean-field approximation essentially replaces individual interactions with
an average, and amounts to a homogeneity assumption.
The more homogeneous a network, the more accurate our approximations. Indeed, the
method of
solving for \math{x_{\rm{eff}}} as a steady state of
\r{eq:dotxeff} and then for \math{x_i^*} as steady states of
\r{eq:dotxi} produces an exact solution
for a regular network (perfectly homogeneous).
The next theorem says that our method in~\r{eq:MF3} is 
perfect for such regular networks.
\begin{theorem}
For a \math{k}-regular network,
the steady-states $z_i^*$ obtained by solving the dynamical system in~\r{eq:MF3} with
\math{x_{\rm{eff}}} obtained as a steady-state of \r{eq:dotxeff} with \math{\beta=k} 
recovers an exact steady-state $x_i^*$.
\end{theorem}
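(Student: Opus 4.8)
The plan is to exploit the fact that a $k$-regular network possesses a \emph{homogeneous} steady-state, in which every vertex carries the same value, and to show that the reduced equation \r{eq:dotxeff} reproduces this common value exactly. First I would confirm that the hypothesis $\beta=k$ is forced by the topology rather than assumed: for an undirected $k$-regular graph every degree equals $k$, so $\beta=\langle\delta^2\rangle/\langle\delta\rangle=k^2/k=k$, consistent with the stated value. This is the cheap bookkeeping step that makes the rest of the argument line up.

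Next I would establish the homogeneous steady-state of the full network. Substituting the constant profile $x_i=x^*$ into \r{eq:general-ODE} and using $\sum_{j}A_{ij}=\delta_i=k$ collapses the interaction sum to $k\,g(x^*,x^*)$, so the steady-state condition becomes $f(x^*)+k\,g(x^*,x^*)=0$. This is \emph{identical} to the condition obtained by setting $\dot x_{\text{av}}=0$ in \r{eq:dotxeff} with $\beta=k$. Hence any root $x_{\text{eff}}$ of the reduced equation is exactly a value $x^*$ realizing a homogeneous steady-state of the full network; in particular we may take $x_{\text{eff}}=x^*$.

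The final verification is that this same constant is a steady-state of the mean-field system \r{eq:MF3}. I would substitute $z_i=x_{\text{eff}}$ for all $i\in V^{(s)}$ into the right-hand side of \r{eq:MF3}. The subgraph interaction term contributes $\sum_{j\in V^{(s)}}A^{(s)}_{ij}g(x_{\text{eff}},x_{\text{eff}})=\delta_i^{(s)}\,g(x_{\text{eff}},x_{\text{eff}})$, while the external term contributes $(\delta_i-\delta_i^{(s)})g(x_{\text{eff}},x_{\text{eff}})=(k-\delta_i^{(s)})g(x_{\text{eff}},x_{\text{eff}})$. The two combine so that the subgraph degree $\delta_i^{(s)}$ cancels and the full degree $k$ is reconstituted, giving exactly $f(x_{\text{eff}})+k\,g(x_{\text{eff}},x_{\text{eff}})=0$. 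Thus $z_i^*=x_{\text{eff}}=x^*$ solves \r{eq:MF3} and coincides with a true steady-state, as claimed.

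The main subtlety---rather than a genuine obstacle---is the matching of equilibria when $f(x)+k\,g(x,x)=0$ admits several roots. The theorem asserts recovery of \emph{an} exact steady-state, not uniqueness, so what must be shown is a correspondence: each steady-state of \r{eq:dotxeff} yields, via the constant profile, a steady-state of \r{eq:MF3} that equals one homogeneous equilibrium of the full network. Because the three steady-state conditions literally coincide after the degree bookkeeping above, this correspondence is exact and invokes no homogeneity approximation---the regular case is precisely where every mean-field approximation of Section~\ref{sec:xeff} collapses to an identity.
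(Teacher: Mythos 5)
Your proof is correct and follows essentially the same route as the paper's own sketch: both arguments observe that a steady-state of \r{eq:dotxeff} with $\beta=k$ satisfies $f(x_{\rm eff})+k\,g(x_{\rm eff},x_{\rm eff})=0$, and that the constant profile $z_i=x_{\rm eff}$ is then a fixed point of both the full system and \r{eq:MF3} because the subgraph degree $\delta_i^{(s)}$ and the external weight $\delta_i-\delta_i^{(s)}$ recombine to $k$. Your write-up is actually slightly more explicit than the paper's sketch in carrying out that degree cancellation in \r{eq:MF3}, but the underlying idea is identical.
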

\begin{proof}(Sketch)
\math{x_{\text{eff}}} is a steady state of \r{eq:dotxeff}
with \math{\beta=k}, hence  \math{f(x_{\rm{eff}})+
k g(x_{\rm{eff}},x_{\rm{eff}})=0} as  \math{\dot{x}=0}.
We show that \math{x_i^*=x_{\rm{eff}}} for \math{i\in[1,N]}
is a fixed point of the system.
Since node $i$ has $k$ neighbors and each of them has state \math{x_j^*=x_{\rm{eff}}},
\math{\dot{x}_i=f(x_i)+ \beta g(x_i,x_{\rm{eff}})=0} when \math{x_i=x_{\rm{eff}}}.  
Lastly, \r{eq:MF3} converges to \r{eq:dotxeff},
because \math{x_i=x_{\rm{eff}}}.
\end{proof}
Essentially, our approach is perfect for regular
networks. The degree
of inhomogeneity in the network is therefore a parameter which controls the
quality of approximation. We now show some experimental results with synthetic networks
where we can control for the inhomogeneity. Even for extremely inhomogeneous
networks, our approach suffers little, indicating the strength of the mean-field
approach.

\begin{figure}[ht]
\centering
\begin{tikzpicture}
\node[scale=1](g)at(0,0){
\includegraphics[width=0.43\textwidth]{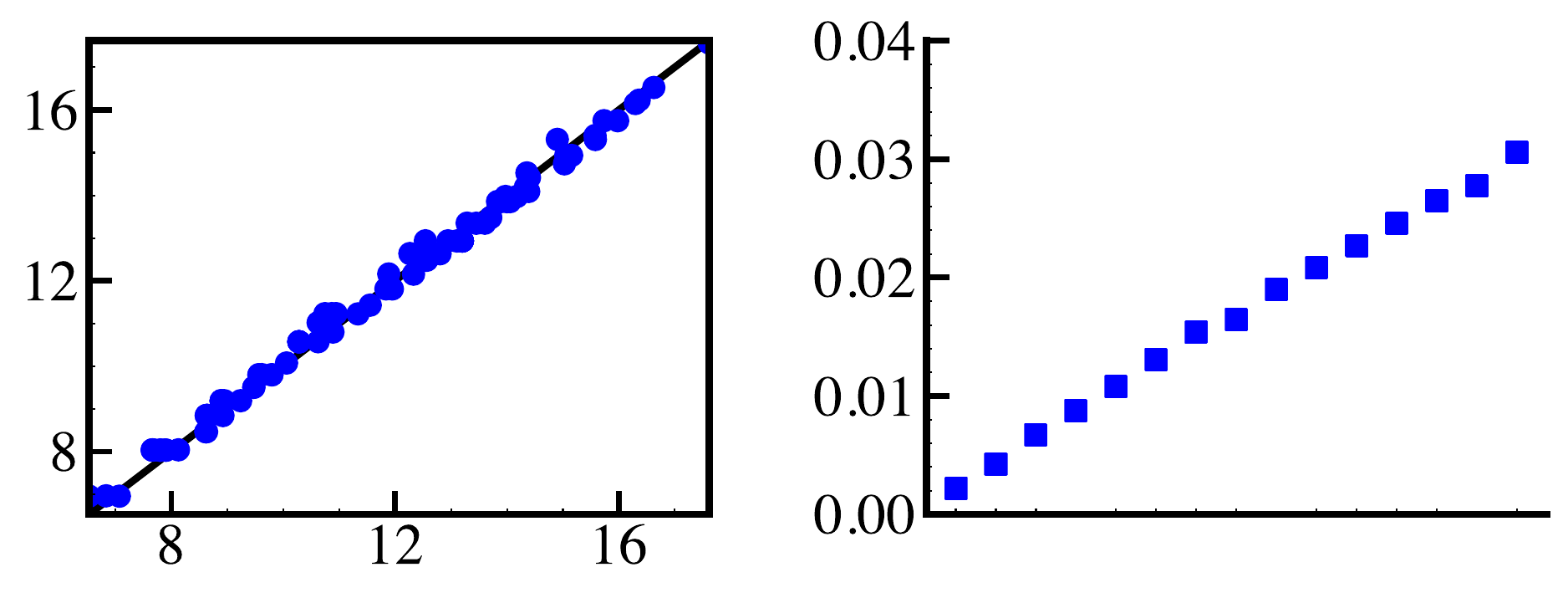}};
\node at(-2,-1.6){True};
\node[rotate=90] at(-3.9,0){Rewired};
\node[scale=0.75] at (1.2,-1.3){Erd\H{o}s-R\'enyi};
\node[scale=0.75] at (3.4,-1.3){scale-free};
\node[rotate=90] at(-0.1,0){Relative error};
\node at(2.2,-1.6){Heterogeneity};
\node at(-2,1.5){(a)};
\node at(2,1.5){(b)};
\end{tikzpicture}
\caption{(a) \math{G^{(\text{rewired})}} vs. \math{G}; (b) Impact of heterogeneity.\label{fig:rewire-heter}}
\end{figure}
To evaluate the impact of heterogeneity,
we use 15 random 1000-vertex networks
with
different heterogeneities \math{\cl H} measured by
the relative degree dispersion,
\math{\cl H\triangleq [\langle \delta^2\rangle-\langle \delta\rangle^2]/\langle \delta\rangle}.
Relative state estimation errors for
our method in~\r{eq:MF3} for an observed subgraph of just 10 vertices and
estimates of \math{\beta} and \math{x_{\text{eff}}} from
Section~\ref{section:estimate-resilience}
are shown in Figure~\ref{fig:rewire-heter}(b).
As expected, performance deteriorates as heterogeneity increases.
The far right is
the very heterogeneous scale-free network and leftmost is the
nearly homogeneous Erd\H{o}s-R\'enyi
network. Even for very heterogeneous networks, the relative error is only about 3\%.

\section{Estimating the Resilience}\label{section:estimate-resilience}
To get \math{x_\text{eff}}, we solve for the steady-states of \r{eq:dotxeff},
so we need to estimate the resilience
\math{\beta={\langle \delta^2\rangle}/{\langle \delta\rangle}},
a topological statistic of the full network.
Resilience is well studied
in science and engineering,
arising in many contexts because it
captures a complex system's ability to
retain its basic functionality under faults~\cite{gao2016universal}.
Understanding a network's resilience is essential for us to evade the 
consequences of resilience loss, such as malfunction of gene regulation 
networks, cascading failures in technological systems, mass extinctions 
in ecological networks.

We estimate resilience \math{\beta} from the sampled subgraph \math{G^{(s)}}.
In other contexts where it is important to measure resilience,
the full network topology is also often not available
perhaps 
due to privacy, or, as is usually
the case in practice, the inability to measure the full network
(for example, protein-protein interactions,
metabolic and terrorist networks~\cite{stumpf2005sampling}).
Despite advances in graph sampling,
  we are not aware of accurate estimators of resilience from
  an incomplete view of the network.
  The naive resilience-estimator treats the observed subnetwork as
  the full network and estimates \math{\beta} with \math{\beta^{(s)}}.
  We derive corrections to this naive estimate for a variety of
  sampling schemes, and give analysis of the estimation accuracy (bias and variance)
  and the sample complexity.
  This general infrastructure enables one to manage a network's resilience
from incomplete data, and may be of independent interest.

We treat resilience estimation for
undirected, unweighted graphs. However, our results can be
easily extended to the directed and weighted cases.
This part of our work falls into the general area of 
graph analysis from incomplete (sampled) subgraphs.
Typical sampling methods are
vertex-based, exploration-based
(see for example
\cite{leskovec2006sampling,hubler2008metropolis,ahmed2011network})
and edge-based. We focus on the first two sampling schemes
since they are natural ways to
sample a graph in practice, however, the results do extend to
edge-based sampling.
The birds-eye view of the
workflow is

\begin{center}
\begin{tikzpicture}[>=latex,line width=1pt,x=1.2cm]
\node[draw,rounded corners=5pt,inner sep=2pt,scale=0.9](G)at(0,0){$G,\beta$};
\node[draw,rounded corners=5pt,inner sep=2pt,anchor=west,scale=0.9](Gs)at($(G.east)+(1.4,0)$){$G^{(s)},\beta^{(s)}$};
    \node[inner sep=3pt,anchor=west,scale=0.9](B)at($(Gs.east)+(1.2,0)$){$\hat\beta(G^{(s)},\beta^{(s)},\bm{\Phi})\approx\beta$,};
    \path
    (G)edge[->]node[above,scale=0.75,inner sep=1pt]{sampler $\bm{\Phi}$}(Gs)
    (Gs)edge[->]node[above,scale=0.75,inner sep=1pt]{this work}(B)    
    ;
\end{tikzpicture}
\end{center}
where $\beta^{(s)}$ is  the naive resilience estimator that treats 
$G^{(s)}$ as if it were the complete network.
Our final estimator $\hat\beta$ can depend on the sampling method
$\bm{\Phi}$.

\subsection{Estimating $\beta$ for Random Subgraphs}

We consider several types of subgraph sampling. The simplest
is to sample $m$ vertices uniformly without replacement and
measure the
degree $\delta_i$ of each vertex. The sample averages
$\langle \delta^2\rangle_s$ and \math{\langle \delta\rangle_s} are unbiased
and concentrate at the
true averages \math{\langle \delta^2\rangle} and \math{\langle \delta\rangle}.
This means \math{\hat\beta=\langle \delta^2\rangle_s/\langle \delta\rangle_s}
is asymptotically unbiased and concentrates at \math{\beta}.
We use \math{\langle \cdot\rangle_s} to denote the average
over the subgraph vertices in \math{V^{(s)}}.

An interesting variant is to sample vertices with a degree-bias,
so the probability to sample vertex \math{i} is
\math{\delta_i/\sum_i\delta_i}. In this case, the expected degree of a sampled
node is \math{\sum_i\delta_i^2/\sum_i\delta_i=\beta}, hence \math{\hat\beta=\langle \delta\rangle_s} is an unbiased estimate of \math{\beta}.
More generally, one can sample vertices with an arbitrary degree-biased
importance distribution \math{q(\delta)}. The analysis
of this more general case, including a bias-variance analysis is postponed to a full
version of this work.

Now, suppose you only
measure \math{\delta^{(s)}_i}, the induced degrees in the subgraph,
not true degrees. To implement
\r{eq:MF3} we need not only \math{\beta}, but also the degree \math{\delta_i}.
Let  \math{\beta^{(s)}} be the resilience of the induced subgraph,
\math{\beta^{(s)}
  =\langle \delta^2\rangle_s/\langle\delta\rangle_s.
  }
A crude analysis just uses concentration twice.
First, the induced degree \math{\delta_i^{(s)}}
is a sum of \math{m-1} Bernoulli random variables
sampled uniformly without replacement from
a population of \math{n-1} Bernoulli values in which \math{\delta_i} of them
are 1. So, \math{\delta_i^{(s)}/(m-1)} concentrates at \math{\delta_i/(n-1)}.
We need concentration for each of the \math{m} sampled vertices,
which, using a union bound plus a Hoeffding inequality, has a
failure probability \math{2m\exp(-\Omega(m\varepsilon^2))} for relative
error \math{\varepsilon}.
So, the estimates
\math{\hat\delta_i=\frac{n-1}{m-1}\delta_i^{(s)}} all concentrate with
relative error at their respective
\math{\delta_i} and hence we get a resilience estimate
\math{\hat\beta=\langle\hat\delta^2\rangle/\langle\hat\delta\rangle=
\frac{n-1}{m-1}\beta^{(s)}} that 
concentrates at \math{\beta}.
A more refined estimate based on
\math{\mathbb{E}[(\delta_i^{(s)})^2]} is
\math{\hat\beta=\frac{n-2}{m-2}\beta^{(s)}-\frac{n-m}{n-2}}.
%\remove{, and the failure probability is 
%\math{2(m+1)\exp(-\Omega(m\varepsilon^2))}.
%A slightly more refined analysis using
%generating functions gives the estimate
%\math{1+\frac{n}{m}(\beta^{(s)}-1)} (see full version).}

Another popular way to sample a subgraph is using a random walk: 
start at a random vertex and 
move from one vertex to a neighbor, chosen uniformly at random
(nodes could be revisited).
Such a walk has a
stationary distribution where a node's sampling probability
is proportional to its
degree~\cite{kurant2011towards}.
From the discussion
of degree-biased sampling, the estimator of resilience is
\math{\hat\beta=\langle \delta\rangle_s}
when the degrees in the full network are known.
When only the induced subgraph is known, then, as with the induced subgraph
from vertex sampling, a correction factor is needed.
We approximate the sampling as independent degree-biased sampling in a random
rewiring model, and get 
a correction factor \math{n/m} (see full version).
We summarize our
discussion of random vertex sampling (VS) and random walks (RW)
in a table.
\begin{table}[ht]{\small\tabcolsep6pt\renewcommand{\arraystretch}{1.2}
\begin{tabular}{r|lll}
       Sampling& Measured& \math{{\hat\delta}_i}&
        \math{\hat{\beta}}\\\hline
      VS&\math{V^{(s)}, E^{(s)}, \delta_i}&\math{\delta_i}&\math{\langle \delta^2\rangle_s/\langle \delta\rangle_s}\\
       Ind-VS&\math{V^{(s)}, E^{(s)}, \delta_i^{(s)}}&\math{\textstyle\frac{n-1}{m-1}\delta_i^{(s)}}&\math{\textstyle\frac{n-2}{m-2}\beta^{(s)}-\frac{n-m}{n-2}}\\
      RW&\math{V^{(s)}, E^{(s)}, \delta_i}&\math{\delta_i}&\math{\textstyle\langle \delta\rangle_s}\\
      Ind-RW&\math{V^{(s)}, E^{(s)}, \delta_i^{(s)}}&\math{\textstyle\frac{n\langle\delta\rangle}{m\hat\beta} \delta^{(s)}_i}&\math{\textstyle\frac{n}{m}\langle \delta\rangle_s}
      \\\hline
      \multicolumn{4}{r}{\scalebox{0.775}{\parbox{0.55\textwidth}{
      For \math{i\in V^{(s)}}, \math{\delta_i} is the degree in 
\math{G}, \math{\delta^{(s)}_i} is the degree in \math{G^{(s)}} and the
resilience of the
subgraph is
\math{\scriptstyle\beta^{(s)}=\langle\delta^2\rangle_s/\langle\delta\rangle_s}.
The notation \math{\langle x\rangle_s} means \math{\scriptstyle\sum_{i\in V^{(s)}}x_i/m}.
}}}
\end{tabular}}
\label{eq:tab:estimates}
\end{table}

\begin{table*}[ht]
\begin{tabular}{r|l}
Applications  & Networks \math{(|V|,|E|)} \\\hline
Ecological & \textbf{ENet1}(270,8074) \cite{arroyo1985community,gao2016universal} \\
&\textbf{ENet2}(97,972) \cite{clements1923experimental,gao2016universal}\\\hline
Gene Regulation & \textbf{MEC}(2268,5620) \cite{lee2002transcriptional,gao2014target}\\
&\textbf{TYA}(662,1062) \cite{lee2002transcriptional,gao2014target} \\\hline
Epidemic & \textbf{Dublin}(410,2765) \cite{rossi2015network} \\
& \textbf{Email}(1133,5451) \cite{guimera2003self,kunegis2013konect}\\\hline
\end{tabular}
\caption{List of networks in our evaluation}
\label{tab:networks}
\end{table*}

\begin{figure*}[ht]
\centering
\includegraphics[width=.95\textwidth]{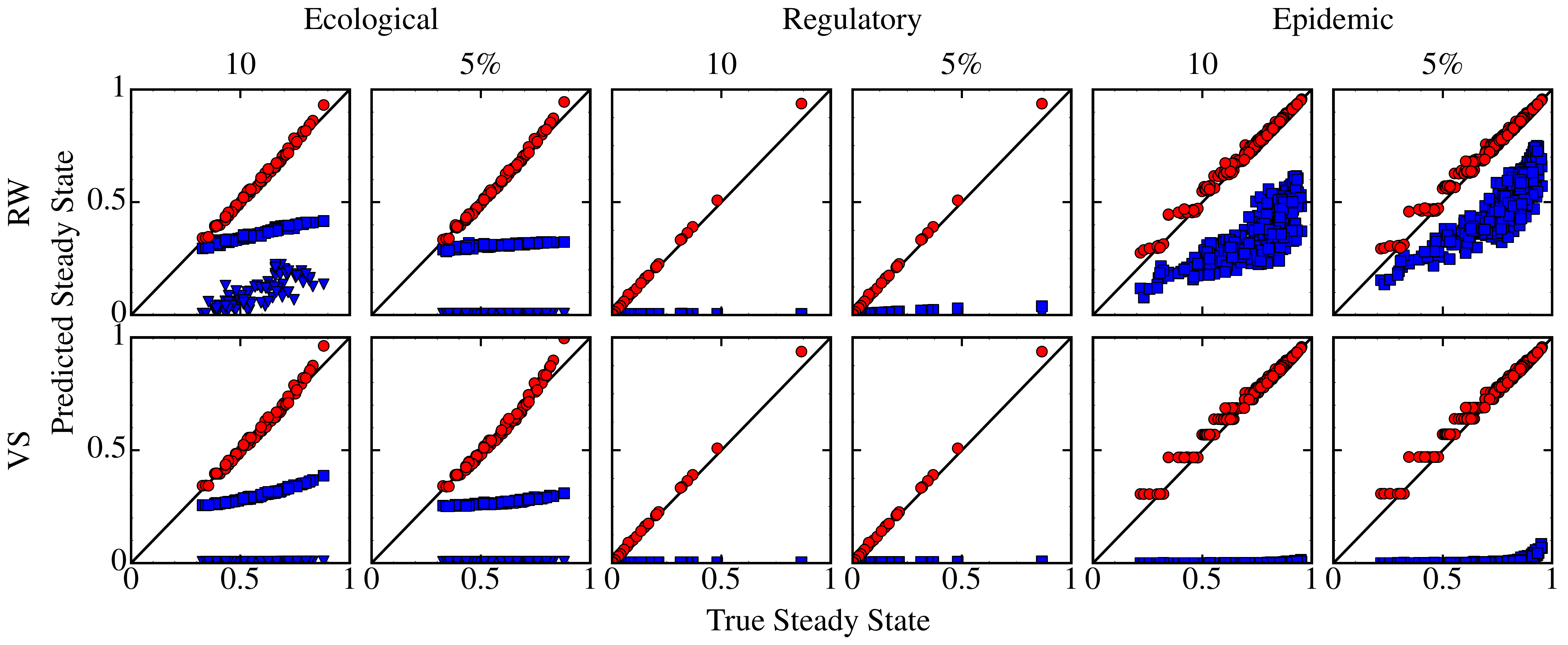}
\caption{Predicted steady states when vertex degrees
  \math{\delta_i} are available. The diagonal line is perfect prediction.
  Our method (red) is nearly perfect, while the naive method (blue) is a
  disaster. Often, the steady states just converge to 0 for
  the naive method.}
\label{fig:predvsreal}
\end{figure*}

\begin{figure*}[ht]
\centering
\includegraphics[width=.9\textwidth]{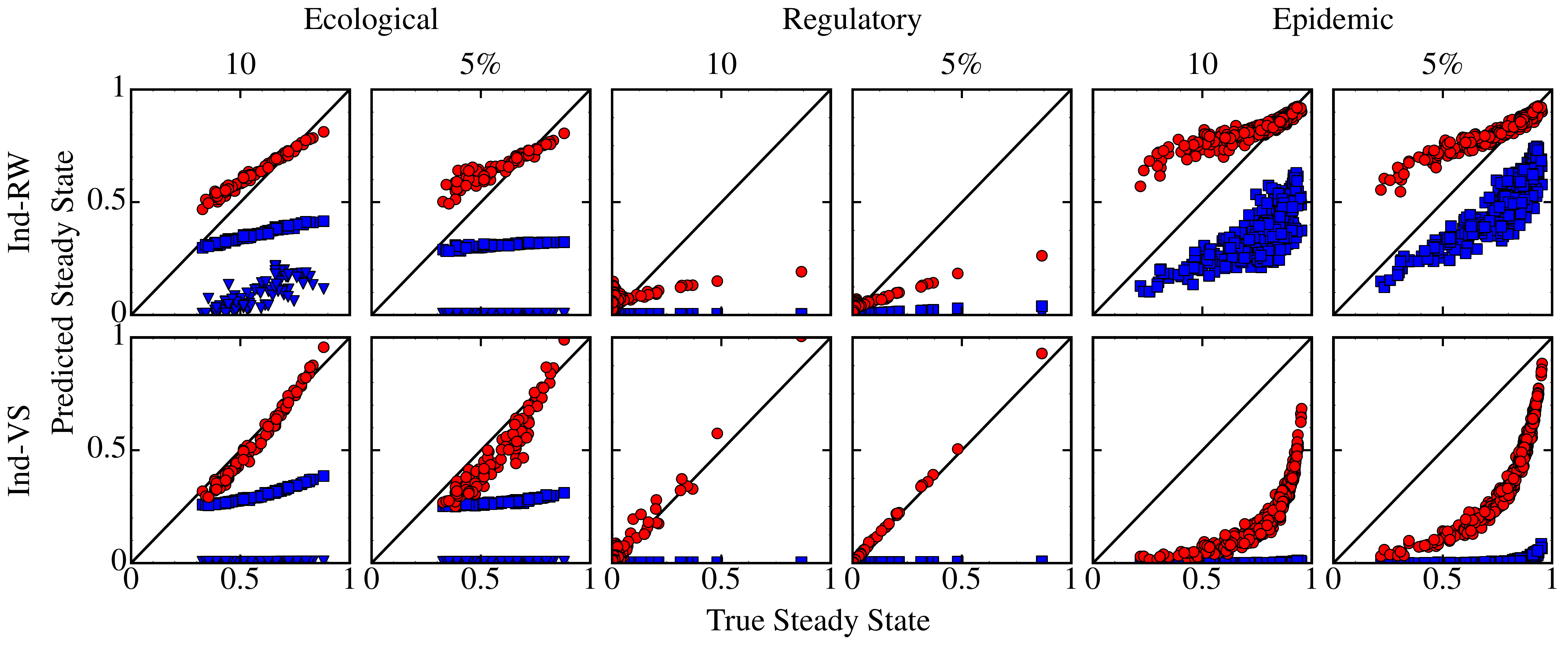}
\caption{Predicted steady-states when only the induced subgraph is
    available and vertex degrees must be estimated.
    Our method (red)
    significantly outperforms naive, but  performance
    drops compared to Figure~\ref{fig:predvsreal}, because
    estimating vertex degrees from observed
    induced degrees is tough.
\label{fig:indpredvsreal}}
\end{figure*}

For Ind-RW, the network average degree
\math{\langle\delta\rangle}
is
needed  to estimate true degrees (in addition to \math{n}).
There are methods to estimate \math{\langle\delta\rangle}
\cite{Dasgupta2014,ZKS2015,RT2010,leskovec2006sampling},
but they use more powerful queries than the induced
subgraph from a simple random walk can provide.
The complication with the simple random walk is that it is a
degree-biased sampling of vertices. This bias
can be corrected with
a Metropolis-Hastings random walk~\cite{hubler2008metropolis}, but
that requires knowledge of
vertex degrees.

One can analyze the bias and variance of
our estimators using the importance sampling
framework that samples vertices according to the proposal distribution
\math{q}~\cite{theodoridis2015machine,cochran1977sampling,wu1982estimation} (uniform or degree-biased).
We postpone the details to the full paper.

%We did a thorough empirical evaluation of the resilience estimates on 
%12 synthetic datasets (Erd\H{o}s-Reyni and Scale Free)
%and 42 real networks (31 from SNAP~\cite{jure2014snap}
%and 11 from KONECT~\cite{kunegis2013konect}). We summarize the results
%(details are in the supplementary material).

\section{Results}
We tested our approach on the three popular dynamical systems in
Table~\ref{tab:real-dynamics} and two corresponding
networks for each dynamical system (see Table~\ref{tab:networks}).

Each dynamical system contains several parameters which are
set as in~\cite{gao2016universal} for ecology and gene regulation,
and as in~\cite{barzel2013universality} for epidemics.
We compare the performance of different subgraph-sampling
(RW, VS, Ind-RW, Ind-VS), with sample size \math{m=10} (constant size)
and \math{m=0.05n} (constant fraction).
We average results
over several subgraphs. When a vertex is sampled in multiple subgraphs,
we report the average predicted steady-state.
%We also test the performance 
%of our approach on different sample sizes in the Supplementary Material.

Given a subgraph,
we solve \r{eq:MF3} to estimate the true steady-states of vertices
in the subgraph, for which we need \math{\delta_i} (true degree)
and \math{x_{\text{eff}}} (effective external state).
To get~\math{x_{\text{eff}}}, we need \math{\beta}
to solve~\r{eq:dotxeff}.
For \math{\delta_i} and \math{\beta}, we
use the estimates in Section \ref{eq:tab:estimates}.
In solving ~\r{eq:dotxeff}, there can be one stable attractor or
more than one stable attractors (there
can also be unstable attractors). When there are more than one
stable attractor, the system can equilibrate at multiple steady-states.

First, we consider sampling methods that obtain
vertex degrees \math{\delta_i} (RW, VS), where
only \math{\beta} needs to be estimated. The results in
Figure~\ref{fig:predvsreal} show that our
approach (red) is remarkable at revealing the true steady-state, even
from tiny subgraphs. We get near perfect results from just
10 vertices of a multi-thousand vertex graph. 
This is all the more impressive given that the current
state-of-the-art, i.e., the naive approach (blue),
is more or less a disaster.
We emphasize that the naive method is the only method currently used by researchers
in the field, and it simply fails.
In the ecological application, the naive
method even identifies multiple steady-states, one of which
sends all species to extinction.

When only induced subgraphs are available (Ind-RW, Ind-VS),
see  Figure~\ref{fig:indpredvsreal},
our methods are still much better than naive, but the performance drops.
Estimating individual degrees from induced degrees is hard.
Parameters like the resilience \math{\beta} are global and
can be extracted accurately from subgraphs. Local parameters
like vertex degree get severely distorted. More details on
how degree and resilience estimations are affected by induced subgraph sampling
is deferred.
For vertex sampling, our estimator is unbiased, but for random walks,
our estimators are based on the approximation of independent degree-biased
sampling. This approximation can break down.
\mand{
  \parbox{0.425\textwidth}{
    \emph{Open Question.} How should one estimate vertex degrees from
    induced subgraphs of random walks? How
    does the estimator depend on the network properties?}}
    We also observe from Figure~\ref{fig:indpredvsreal} that for induced
    subgraphs, the performance
    of our approach (and the naive method) depends strongly on the subgraph
    sampling
    methods, depending on the network structure
    and the nonlinear dynamics.

\mand{
  \parbox{0.425\textwidth}{
    \emph{Open Question.} What are the factors which
    influence the choice of subgraph sampling method?
    For example when is RW better than VS?
}}    

\section{Discussion}

We addressed a prevalent problem.
Consider this scenario.
A biologist has the favorite part of
an ecosystem, their favorite 10-species, and carefully collects their
relationships which are summarized in the adjacency matrix
\math{A^{(s)}}. The biologist even knows how species interact,
the dynamical system \r{eq:general-ODE}. The biologist carefully
simulates the system to steady-state and finds that all species are going
extinct. This is scary, but the result is just plain wrong. You cannot
restrict a coupled nonlinear system to your favorite part of the
network and expect even close to correct conclusions by just analyzing that
part in isolation. One solution is to collect the full network
and analyze the full system. There are two problems. First,
we can't collect the
full network. Second, simulating the full system to equilibrium is prohibitive
in terms of convergence time. So the only feasible solution
for learning the true steady states of the observed incomplete network 
is to somehow account
for the external impact on the local system. This was our approach. In
a mean field approximation, the external impact reduces to a single
parameter 
\math{x_{\text{eff}}} which depends only on the network's resilience $\beta$,
a topological parameter. We showed how to estimate resilience,
depending on
how the subgraph is sampled. Our results on real networks with
corresponding dynamics gave spectacular success -- we accurately recover
steady-states from just 10-vertex subgraphs of thousand-vertex networks.

There are several interesting future questions.
The natural one is to find
improved estimates of resilience that extend to
other sampling methods (snowball sampling, edge sampling etc).
A critical direction, which we address in forthcoming work, is the inverse
problem. Suppose the steady-states are known. For example,
the abundance of your 10 favorite monkey species is known.
Can one infer the correct
dynamical system \math{f(\cdot),g(\cdot,\cdot)}?
Currently, the dynamics are fit to the observed steady-states for the partial
network~\cite{ghahramani1999learning,schmidt2009distilling,ionides2006inference,bongard2007automated,brunton2016discovering}.
This is wrong and will produce the incorrect
dynamics.
It is no surprise, therefore, that the inferred
dynamics keeps changing as more data is collected
\cite{schmidt2009distilling,ionides2006inference,bongard2007automated,brunton2016discovering}.
It is absolutely
critical to account for the external impact in the inference
process.

\section{Acknowledgements}
This work was partially supported by the ONR Contract N00014-15-1-2640.

\fontsize{9.0pt}{10.0pt}
\bibliographystyle{aaai}
\bibliography{AAAI-JiangC.9161}

\end{document}